\newtheorem{theorem}{Theorem}
\newtheorem{lemma}{Lemma}
\newcommand {\ind} {\mathbbm{1}}
\newcommand {\reals} {{\rm I\!R}}
\newcommand {\bs} {\mbox{\boldmath $s$}}
\newcommand {\bu} {\mbox{\boldmath $u$}}
\newcommand {\bx} {\mbox{\boldmath $x$}}
\newcommand {\by} {\mbox{\boldmath $y$}}
\newcommand {\bU} {\mbox{\boldmath $U$}}
\newcommand {\bX} {\mbox{\boldmath $X$}}
\newcommand{\calA}{{\cal A}}
\newcommand{\calC}{{\cal C}}
\newcommand{\calD}{{\cal D}}
\newcommand{\calE}{{\cal E}}
\newcommand{\calR}{{\cal R}}
\newcommand{\calS}{{\cal S}}
\newcommand{\calT}{{\cal T}}
\newcommand{\calU}{{\cal U}}
\newcommand{\calX}{{\cal X}}
\newcommand{\calY}{{\cal Y}}
\newcommand {\hP} {\hat{P}}  
\newcommand {\tP} {\tilde{P}}  %
\newcommand {\tby} {\tilde{\boldsymbol{y}}}
\newcommand{\eqde}{\triangleq}
\def\be{\begin{eqnarray}}
\def\ee{\end{eqnarray}}
\def\ben{\begin{eqnarray*}}
\def\een{\end{eqnarray*}}
\begin{document}

\thispagestyle{empty}
\title{Achievable Error Exponents for Channel with Side Information -- Erasure and List Decoding}
\author{Erez Sabbag and Neri Merhav}

\maketitle

\begin{center}
Department of Electrical Engineering \\
Technion - Israel Institute of Technology \\
Technion City, Haifa 32000, Israel\\
{\tt \{erezs@tx, merhav@ee\}.technion.ac.il}
\end{center}
\vspace{1.5\baselineskip}
\setlength{\baselineskip}{1.5\baselineskip}

\begin{abstract}
We consider a decoder with an erasure option and a variable size list decoder for channels with non-casual side information at the transmitter. First, universally achievable error exponents are offered for decoding with an erasure option using a parameterized decoder in the spirit of Csisz\'{a}r and K\"{o}rner's decoder. Then, the proposed decoding rule is generalized by extending the range of its parameters to allow variable size list decoding. This extension gives a unified treatment for erasure/list decoding. Exponential bounds on the probability of  list error and the average number of incorrect messages on the list are given. Relations to Forney's and Csisz\'{a}r and K\"{o}rner's decoders for discrete memoryless channel are discussed. These results are obtained by exploring a random binning code with conditionally constant composition codewords proposed by Moulin and Wang, but with a different decoding rule.

\end{abstract}
\section{Introduction}
\label{sec.intro}

A decoder with an erasure option is a decoder which has the option of not deciding, i.e., to declare an ``erasure''. On the other hand, a variable size list decoder is a decoder which produces a list of estimates for the correct message rather than a single estimate, where a list error occurs when the correct message is not on the list. In \cite{Forney68}, Forney explored the random coding error exponents of erasure/list decoding for discrete memoryless channels (DMC's). These bounds were obtained by analyzing the optimal decoding rule \cite[eq.~(11)]{Forney68}
\be
\label{Forney_dec}
\by \in \calR_m \; \textrm{iff} \; \Pr(\by,\bx_m) \ge e^{NT} \sum_{m' \neq m} \Pr(\by,\bx_{m'})
\ee
where $\Pr(\by,\bx_m)$ is the joint probability of the channel output $\by$ and the codeword $\bx_m$, and $T$ is an arbitrary parameter.
The bounds were obtained using Gallager's bounding techniques. Forney showed that the list option and the erasure option are ``two sides of the same coin'', namely, by changing the value of $T$ one can switch from list decoding ($T$ is negative) to decoding with an erasure option ($T$ is positive).

In \cite[Th.~5.11]{CK81}, Csisz\'{a}r and K\"{o}rner derived universally achievable error exponents for a decoder with an erasure option for DMC's. These error exponents were obtained by analyzing the following universal decoding rule \cite[p.~176]{CK81} for constant composition (CC) codes:
\be
\label{csiszar_dec}
\varphi(\by) = \left\{\begin{array}{lll} m &,&  I(\bx_m;\by) > \tilde{R} + \lambda |I(\bx_{m'};\by)- R|^+ \quad \forall m' \neq m \\
0 &,& \textrm{else} \end{array} \right.
\ee
where $R$ is the code rate (i.e., $m \in \{1,\dots,2^{NR} \}$), $\bx_m$ is a codeword taken from a given type class $T_{\bx}$, $\by$ is the channel output, $I(\bx;\by)$ is the empirical mutual information, and
$\tilde{R} \ge R$ and $\lambda>0$ are arbitrary parameters.
This decoding rule generalizes the maximum mutual information (MMI) decoder \cite[p.~164]{CK81} to include an erasure option.
The bounds were obtained using a fixed composition coding and by applying the packing lemma derived in \cite[Lemma~5.1]{CK81}. However, these bounds were not extended to variable size list decoding. We note that the decoding rule \eqref{csiszar_dec} depends on the coding rate $R$, which might limit its generality.
Moreover, it was stated that \eqref{csiszar_dec} is an unambiguous decoding rule for $\lambda>0$, a fact that was used to derive the error exponents. It turns out that this decoding rule is unambiguous only when $\lambda \ge 1$. Unlike Forney's decoder \eqref{Forney_dec}, no optimality claims were made for this decoder but, in \cite[Sec.~4.4.3]{TelatarPhD92} Teletar stated that these bounds are ``essentially the same as those in \cite{Forney68}''.

Recently, Moulin \cite{Moulin08} generalized Csisz\'{a}r's decoder using a weighting function:
\ben
\textstyle
\varphi(\by) = \left\{\begin{array}{l} m \:,\:  I(\bx_m;\by) > R + \max_{m' \neq m} F\big(I(\bx_{m'};\by)- R\big)  \\
\textstyle
0 \:,\: \textrm{else} \end{array} \; , \right.
\een
where $F(\cdot)$ is a continuous, non-decreasing function. The corresponding error exponents were analyzed and it it was shown that for some rates and channels these error exponents coincide with Forney's error exponents. Note that Moulin's proposed decoder is a function of the code rate $R$ similarly to Csisz\'{a}r's decoder.

In \cite{TelatarGallager94},\cite{Telatar98} Teletar and Gallager proposed tighter exponential bounds on decoding with an erasure option and list decoding for DMC using the method of types. These bounds are not universal in general since the decoding metric depends on the channel statistics. However, it is claimed that under certain conditions these bounds are tighter than Forney's bounds. See \cite[Sec.III]{Telatar98}.

As far as we know, no similar bounds were ever offered for discrete memoryless channels with random states, which are observed by the encoder but not by the decoder \cite{GelPin80}. For ordinary decoding (without erasure/list option), Moulin and Wang \cite{MoulinWang07} recently derived an achievable error exponent for channels with state information present non-causally at the transmitter. These results were obtained by analyzing the error probability of a stacked binning scheme and a maximum penalized mutual information (MPMI) decoder.

In this work, we use the random code construction proposed by Moulin and Wang \cite{MoulinWang07} to derive achievable error exponents for decoding with an erasure option and variable size list decoding.  In Section~\ref{sec.notation}, we propose a parameterized decoding rule with an erasure option in the spirit of \eqref{csiszar_dec}.
In Section~\ref{sec.erasure}, we derive universally achievable error exponents by analyzing the proposed decoding rule. In Section~\ref{sec.list}, achievable error exponents are offered to decoder with a list option. These exponents are obtained by extending the range of the proposed decoder's parameters to allow decoding with a list option. The generalized decoding rule enables a unified treatment for erasure/list decoding similar to Forney's decoder \eqref{Forney_dec}. In Section~\ref{sec.discussion}, relations to Forney's and Csisz\'{a}r and K\"{o}rner's decoders for DMC are discussed. Moreover, it is shown that the obtained error exponents generalize some known results.

\section{Notation and Preliminaries}
\label{sec.notation}

We begin with some notations and definitions. Throughout this work, capital
letters represent scalar random variables (RVs\label{p.RV}), and specific
realizations of them are denoted by the corresponding lowercase letters. Random
vectors of dimension $N$ will be denoted by bold-face letters. The notation
$\mathbbm{1}\{A\}$, where $A$ is an event, will designate the indicator
function of $A$ (i.e.,$\mathbbm{1}\{A\}=1$ if $A$ occurs and
$\mathbbm{1}\{A\}=0$ otherwise).
The notion $a_n \doteq b_n$, for two positive sequences $\{a_n\}_{n \ge 1}$ and $\{b_n\}_{n \ge 1}$, expresses asymptotic equality in the logarithmic scale, i.e.,
$$
\lim_{n \to \infty} \frac{1}{n} \ln \left(\frac{a_n}{b_n}\right)=0.
$$
Let the vector $\hat{P}_{\bx}=\big\{\hat{P}_{\bx}(a), \; a \in \calX \big\}$ denote the empirical distribution induced by a vector $\bx \in \calX^n$, where $\hP_{\bx}(a) = \frac{1}{n}{\sum_{i=1}^n \mathbbm{1}\{x_i=a\}}$. The type class $T_{\bx}$ is the set of vectors $\tilde{\bx} \in \calX^n$ such that $\hat{P}_{\tilde{\bx}}=\hat{P}_{\bx}$. A type class induced by the empirical distribution $\hP_{\bx}$ will be denoted by $T(\hP_{\bx})$.
Similarly, the joint empirical distribution induced by $(\bx,\by)\in \calX^n \times \calY^n$ is the vector
$\hat{P}_{\bx\by}=\left\{\hat{P}_{\bx\by}(a,b),~a\in\calX,~b\in\calY \right\}$ where
\ben
\hat{P}_{\bx\by}(a,b)=\frac{1}{n}\sum_{i=1}^n\mathbbm{1}\big\{x_i=a,y_i=b\big\},~~~x\in\calX,~y\in\calY \;,
\een
i.e., $\hat{P}_{\bx\by}(a,b)$ is the relative frequency of the pair $(a,b)$
along the pair sequence $(\bx,\by)$. Likewise, the type class $T_{\bx\by}$ is the set of all pairs $(\tilde{\bx},\tby) \in \calX^n \times \calY^n$ such that $\hat{P}_{\tilde{\bx}\tilde{\by}}=\hat{P}_{\bx\by}$. The conditional type class $T_{\by|\bx}$, for  given vectors $\bx \in \calX^n$, and $\by \in \calY^n$ is the set of all vectors $\tby \in \calY^n$ such that $T_{\bx\tby}=T_{\bx\by}$. The Kullback-Leibler divergence between two distributions $P$ and $Q$ on $\calA$, where $|\calA| < \infty$  is defined as
$$
\calD(P \| Q) = \sum_{a \in \calA} P(a) \ln \frac{P(a)}{Q(a)} \;,
$$
with the conventions that $0 \ln 0 =0$, and $p \ln \frac{p}{0}=\infty$ if $p >0$. We denote the empirical entropy of a vector $\bx \in \calX^n$ by $\hat{H}(\bx)$, where $\hat{H}(\bx)= - \sum_{a \in \calX} \hat{P}_{\bx}(a) \ln \hat{P}_{\bx}(a)$. Other information theoretic quantities governed by empirical distributions (e.g., conditional empirical entropy, empirical mutual information) will be denoted similarly.
Finally, we define $|t|^+ \eqde \max\{0,t\}$ and $\exp_2(t) \eqde 2^t$.

Consider a discrete memoryless state dependent channel with a finite input alphabet $\calX$, a finite state alphabet $\calS$, a finite output alphabet $\calY$, and a probability transition distribution $W(y|x,s)$. Given an input sequence $\bx$ and a state sequence $\bs$ emitted from a discrete memoryless source $P_S(\bs)=\prod_{i=1}^N P_S(s_i)$, the channel output sequence $\by$ is generated according to the conditional distribution
$W(\by|\bx,\bs) = \prod_{i=1}^N W(y_i|x_i,s_i)$.
A message $m \in \{1,\ldots,M\}$ is to be transmitted to the receiver. We assume that the state sequence $\bs$ is available at the transmitter non-causally, but not at the receiver. We also assume that all messages are a-priori equiprobable.
Given $\bs$ and $m$, the transmitter produces a sequence $\bx = f_N(\bs,m)$ which is used to convey message $m$ to the decoder.

\subsection{Codebook construction \cite{MoulinWang07}}
\label{subsec.code}
In \cite[p.~1337]{MoulinWang07}, Moulin and Wang used in their derivation a binning code with conditionally constant composition (CCC) codewords. This code will be used in our proofs. For the sake of completeness, we briefly describe the code construction and the encoding process. The decoding part will be described in detail later. The code construction requires the use of an auxiliary random variable $U \in \calU$ which takes on values in a finite set of size $|\calX||\calS|+1$. See \cite[Sec.~III.E]{MoulinWang07} for more information.

For a given empirical conditional distribution $\hP^*_{\bx\bu|\bs}$, a sub-code $\calC(\hP_{\bs})$ is constructed for each state sequence type class $T_{\bs}=T(\hP_{\bs})$. 
Given a state type class $T(\hP_{\bs})$, compute the marginal distribution
$$
\hP^*_{\bu}(u) = \sum_{x} \sum_{s} \hP^*_{\bx\bu|\bs}(x,u|s) \hP_{\bs}(s)
$$
where $\hP_{\bs}$ is the empirical distribution induced $T_{\bs}$. Note that $\hP^*_{\bu}(u)$ is a function of $\hP_{\bs}$ and it might be different for a different state type class. Draw $2^{N (R+\rho(\hP_{\bs}))}$ random vectors independently from the type class $T^*_U(\hP_{\bs})$ induced by $\hP^*_{\bu}$, according to uniform distribution, where $\rho(\cdot)$ is a general bin-depth function. Arrange the vectors in an array with $M=2^{NR}$ columns and $2^{N \rho(\hP_{\bs})}$ rows. The code $\calC$ is the union of all sub-codes, i.e., $\calC = \bigcup_{\hP_{\bs}} \calC(\hP_{\bs})$. Note that the number of these sub-codes is polynomial in $N$. In this work, we choose $\rho(\hP_{\bs})=I^*_{US}(\hP_{\bs})+\epsilon$ , where
$$
I^*_{US}(\hP_{\bs})= \sum_{u,s} \hP_{\bs}(s)P^*_{\bu|\bs}(u|s) \log \frac{\hP_{\bs}(s)P^*_{\bu|\bs}(u|s)}{\hP^*_{\bu}(u)} \;,
$$
i.e., $I^*_{US}(\hP_{\bs})$ is the mutual information $I(U;S)$ induced by $\hP_{\bs}(S)\cdot \hP^*_{\bu|\bs}(U|S)$, and $\epsilon$ is an arbitrarily small positive constant.
This choice ensures that the probability of encoding error vanishes at a double-exponentially rate \cite[p.~1338]{MoulinWang07}.

The encoding of message $m$  given a state sequence $\bs$ is done in two steps: (i) ~Find an index $l$ such that $\bu_{l,m} \in \calC(\hP_{\bs})$ is a member of the conditional type class  $T^*_{\bu_{l,m}|\bs} = \{\bu' : \hP_{\bu'\bs} = \hP^*_{\bu_{l,m}|\bs}\hP_{\bs} \}$. If more than one such $l$ exists, pick one at random under the uniform distribution. If no such $l$ can be found, pick $\bu$ at random from $T^*_{\bu|\bs}$ under the uniform distribution. (ii) ~Draw $\bX$ uniformly from $T^*_{\bx|\bu_{l,m} \bs}$, induced by $\hP^*_{\bx\bu|\bs}$ and $(\bu_{l,m},\bs)$. For notational simplicity, we use the shorthand $\lambda$ to denote the type of state sequences $\hP_{\bs}$, and $\bu_{\lambda,l,m}$ to denote $\bu_{l,m} \in \calC(\lambda)$.

In \cite{MoulinWang07}, a maximum penalized mutual information (MPMI) decoder was used to decode the above code. A MPMI decoder seeks a vector $\bu \in \calC$ that maximizes the penalized empirical mutual information criterion $\max_{\hP_{\bs}} \max_{\bu \in \calC(\hP_{\bs})} \big[I(\bu;\by) - \psi(\hP_{\bs}) \big]$, where $\psi(\cdot)$ is a general penalty function. It was shown that the optimal choice of these functions is $\rho(\hP_{\bs})=\psi(\hP_{\bs})=I^*_{US}(\hP_{\bs})+\epsilon$ where $\epsilon$ is an arbitrarily small positive constant.
In this work, we assume that  $\psi(\hP_{\bs})=I^*_{US}(\hP_{\bs})=I^*_{\hP_{\bs}P^*_{\bu|\bs}}(U;S)$ for reasons that will be given later.
To allow decoding with an erasure/list option, we propose to modify the MPMI decoding rule in the spirit of \eqref{csiszar_dec}. We choose $\rho(\hP_{\bs})=\psi(\hP_{\bs})=I^*_{US}(\hP_{\bs})+\epsilon$ for reasons that will be discussed in Section~\ref{sec.discussion}.

\subsection{The proposed decoding rule}
For a given code $\calC$ constructed as described in Subsection~\ref{subsec.code}, we propose to use the following decoder $\varphi: \calY^N  \to \{0,1,\ldots,M\}$ with an erasure option: Declare $m$ if
\be
\label{dec_rule1}
I(\bu_{\lambda,l,m};\by)-I^*_{US}(\lambda) > T +  \alpha\big|I(\bu_{\lambda',l',m'};\by)-I^*_{US}(\lambda')\big|^+ \quad \forall m' \neq m, \lambda', l',
\ee
otherwise, declare $0$ (i.e., ``erasure''), where $\alpha \ge 1$ and $T \ge 0$ are arbitrary parameters.

Our first step is to show that this decoder is unambiguous, i.e., at most one message index taken from $\{1,\ldots,M\}$ fulfills \eqref{dec_rule1}. This property is essential to allow decoding with an erasure option. This property is stated in the following Lemma:

\begin{lemma}
\label{lemma_ambiguity}
For $\alpha \ge 1$ and $T \ge 0$, the proposed decoding rule \eqref{dec_rule1} is unambiguous.
\end{lemma}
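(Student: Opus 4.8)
The plan is to collapse the quantified comparison in \eqref{dec_rule1} into a single scalar comparison per message, and then argue by contradiction. For each $m \in \{1,\ldots,M\}$ I would define the best penalized empirical mutual information over the codewords associated with $m$,
$$
f(m) \eqde \max_{\lambda,l}\big[I(\bu_{\lambda,l,m};\by)-I^*_{US}(\lambda)\big],
$$
which is well defined because the code is finite, so the maximum is attained. Since $|t|^+$ is non-decreasing in $t$, requiring the inequality in \eqref{dec_rule1} to hold for \emph{every} $\lambda',l'$ is the same as requiring it for the maximizing $\lambda',l'$; and choosing on the left the $\lambda,l$ that maximize the penalized mutual information is the most favorable choice for declaring $m$. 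Hence declaring $m$ is equivalent to the condition $f(m) > T + \alpha\,|f(m')|^+$ holding for all $m' \neq m$.

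Next I would suppose, toward a contradiction, that two distinct indices $m_1,m_2 \in \{1,\ldots,M\}$ both satisfy the decoding rule. This yields the pair of inequalities $f(m_1) > T + \alpha|f(m_2)|^+$ and $f(m_2) > T + \alpha|f(m_1)|^+$. Because $T \ge 0$ and $\alpha|f(\cdot)|^+ \ge 0$, each right-hand side is non-negative, which forces $f(m_1) > 0$ and $f(m_2) > 0$; consequently the clipping is inactive, i.e.\ $|f(m_1)|^+ = f(m_1)$ and $|f(m_2)|^+ = f(m_2)$. The contradiction then comes from the remaining hypotheses: from $f(m_1) > T + \alpha f(m_2) \ge \alpha f(m_2) \ge f(m_2)$, using $T \ge 0$, $\alpha \ge 1$, and $f(m_2) > 0$, we obtain $f(m_1) > f(m_2)$, while the symmetric argument gives $f(m_2) > f(m_1)$, which is impossible. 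Therefore at most one message index can be declared, which is the assertion of the lemma.

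I expect the only genuine subtlety — and the place where the hypotheses must be invoked with care — to be the interplay between the positive-part operator and the precise value $\alpha \ge 1$. Establishing that both $f$-values are strictly positive is exactly what removes the clipping, and only after this reduction does $\alpha \ge 1$ force the strict ordering $f(m_1) > f(m_2)$. This is consistent with the paper's earlier remark that $\alpha > 0$ alone (as in the Csisz\'ar--K\"orner statement) does not suffice: for any $\alpha < 1$ one may take $T = 0$ and $f(m_1) = f(m_2) > 0$, and then both inequalities hold simultaneously, producing an ambiguous decision. Thus the bound $\alpha \ge 1$ is precisely the condition that rules out simultaneous declarations, and the proof should make this dependence explicit rather than treating it as a routine sign manipulation.
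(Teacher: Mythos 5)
Your proof is correct and follows essentially the same route as the paper's: assume two messages are simultaneously declared, use $T \ge 0$ to conclude both penalized mutual information values are strictly positive so the clipping $|\cdot|^+$ is inactive, and then derive a contradiction from $\alpha \ge 1$. The only differences are cosmetic — you collapse the $\lambda,l$ quantifiers into $f(m)$ via a maximum and conclude with the symmetric contradiction $f(m_1) > f(m_2) > f(m_1)$, whereas the paper chains the two inequalities into $0 < (1-\alpha^2)J - T(1+\alpha)$ and notes the right-hand side cannot be positive.
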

The proof of Lemma~\ref{lemma_ambiguity} is deferred to the Appendix.
Using a similar proof of Lemma~\ref{lemma_ambiguity}, it can be shown that Csisz\'{a}r and K\"{o}rner's decoder \eqref{csiszar_dec} might be ambiguous if $0 < \lambda < 1$, contrary to the statement made in \cite[Th.~5.11]{CK81}.

\section{Erasure Option}
\label{sec.erasure}

Given a code $\calC$, a decoder with an erasure option is a partition of $\calY^N$ into $(M+1)$ regions $\calR_0,\calR_1,\ldots, \calR_M$.
The decoder decides in favor of message $m$ if $\by \in \calR_m$, $m=1,\ldots,M$, or it declares ``erasure'' if $\by \in \calR_0$. Following Forney \cite{Forney68}, let us define two error events. The event $\calE_1$ is the event in which $\by$ does not fall in the decision region of the correct message. The event $\calE_2$ is the event of \emph{undetected error}, namely, the event in which $\by$ falls in $\calR_{m'}$, $m' \neq 0$, while $m$ was transmitted. The probabilities of these error events are given by
\be
\Pr \{ \calE_1 \} & = & \frac{1}{M} \sum_{m=1}^M \sum_{\by \in \calR_m^c} P(\by|\bx_m) \\
\Pr \{ \calE_2 \} & = & \frac{1}{M} \sum_{m=1}^M \sum_{\by \in \calR_m} \sum_{m' \neq m} P(\by|\bx_{m'})
\ee
where $P(\by | \bx_m) = \sum_{\bs \in \calS^N} P_S(\bs) W\big(\by|\bx_m(\bs),\bs \big)$.

Let $J(P_S P_{UX|S} P_{Y|XS}) \eqde I(U;Y) - I(U;S)$ where $P_S$ , $P_{UX|S}$, and $P_{Y|XS}$ are three (conditional) probability distributions of the quadruplet RVs $(U,S,X,Y)$.
The following theorem presents exponential bounds on $\Pr\{\calE_1\}$ and $\Pr\{\calE_2\}$ for decoding with erasure option:
\begin{theorem}
\label{Thorem_erasure}
For every $\alpha \ge 1$ and $T \ge 0$ there exists a $N-$length block code of rate $R$ such that the following error exponents can be achieved simultaneously
\be
\Pr\{\calE_1\} &\le& \exp_2 \left\{ -N E_1(R,W,T,\alpha) \right\}  \\
\Pr\{\calE_2\} &\le& \exp_2 \left\{ -N E_2(R,W,T,\alpha) \right\}
\ee
where
\begin{multline}
\label{E_1_erasure}
E_1(R,W,T,\alpha) = \min_{ \tP_{S}} \max_{\tP_{UX|S}} \min
\Big\{
\min_{\tP_{Y|XS} : J(\tP_S \tP_{UX|S} \tP_{Y|XS}) \le T} \calD(\tP_{S}\tP_{UX|S}\tP_{Y|XS} \| P_S \tP_{UX|S} W) , \\
\min_{\tP_{Y|XS}} \Big[  \calD(\tP_{S}\tP_{UX|S}\tP_{Y|XS} \| P_S \tP_{UX|S} W)
+ \Big|  \frac{1}{\alpha} \big(J(\tP_S \tP_{UX|S} \tP_{Y|XS}) - T\big)  - R   \Big|^+ \Big]  \Big\}
\end{multline}
and
\begin{multline}
\label{E_2_erasure}
E_2(R,W,T,\alpha) =
 \min_{\tP_{S}} \max_{\tP_{UX|S}} \min_{\tP_{Y|XS}} \Big\{ \\
\calD\big(\tP_{S}\tP_{UX|S}\tP_{Y|XS} \| P_{S}\tP_{UX|S}W\big)
 + \Big| T + \alpha |J(\tP_S \tP_{UX|S} \tP_{Y|XS})|^+ - R   \Big|^+  \Big\}\;.
\end{multline}
\end{theorem}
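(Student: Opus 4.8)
The plan is to run a random-coding argument over the Moulin--Wang CCC binning ensemble and to evaluate the two error probabilities by the method of types. Since there are only polynomially many state type classes, I would first condition on the emitted state sequence $\bs$ falling in a fixed type class $T(\tP_S)$; this costs a factor $\doteq\exp_2\{-N\calD(\tP_S\|P_S)\}$, and because the overall probability is a sum over state types its exponent is the minimum over $\tP_S$, which explains the outer $\min_{\tP_S}$. Conditioned on $\bs$ and on message $1$ being sent, the encoder selects $\bu=\bu_{\lambda,l,1}$ so that the empirical joint type of $(\bu,\bx,\bs)$ equals the design composition $\tP_{UX|S}$ of the subcode (a free parameter, hence the $\max_{\tP_{UX|S}}$), and $\by$ is produced by $W$. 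I would then decompose the error events: $\calE_1$ (the correct region is missed) is contained in the union of the event that the transmitted score is small, $\{I(\bu;\by)-I^*_{US}(\lambda)\le T\}$, and the event that some wrong-message codeword lies within the decoding margin; $\calE_2$ (undetected error) is contained in the event that some wrong-message codeword beats the transmitted one by the full margin, i.e. $I(\bu';\by)-I^*_{US}(\lambda')>T+\alpha|I(\bu;\by)-I^*_{US}(\lambda)|^+$ for some competitor $(\lambda',l',m'\neq1)$.

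Two type estimates do the work. First, for fixed $(\bu,\bx,\bs)$ the probability that $W$ produces an output of conditional type $\tP_{Y|XS}$ is $\doteq\exp_2\{-N\calD(\tP_{Y|XS}\|W\,|\,\tP_S\tP_{UX|S})\}$; combined with the state-type cost this is exactly the joint divergence $\calD(\tP_S\tP_{UX|S}\tP_{Y|XS}\|P_S\tP_{UX|S}W)$ appearing in both exponents, since the common $(U,X,S)$-marginals split it into $\calD(\tP_S\|P_S)$ plus the conditional channel divergence, and the empirical $J=I(U;Y)-I(U;S)$ is a continuous functional of $\tP_{Y|XS}$. Second, I need the packing estimate: for a fixed $\by$ and a codeword $\bu'$ drawn uniformly from the $U$-type class of a competitor subcode $\calC(\lambda')$ (marginal $\hP^*_{\bu}$), $\Pr\{I(\bu';\by)\ge\theta\}\doteq\exp_2\{-N\theta\}$. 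Each competitor subcode contributes $\doteq\exp_2\{N(R+I^*_{US}(\lambda')+\epsilon)\}$ codewords, so the expected number with $I(\bu';\by)\ge\theta$ is $\doteq\exp_2\{N(R+I^*_{US}(\lambda')+\epsilon-\theta)\}$, and a truncated union bound (Markov together with the trivial bound $1$) gives an error exponent $|\theta-I^*_{US}(\lambda')-R-\epsilon|^+$. The decisive point is that the metric penalty $-I^*_{US}(\lambda')$ cancels the bin depth $I^*_{US}(\lambda')$, leaving a clean expression in $R$ alone.

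With these in hand I would assemble the exponents. For $\calE_2$ the competitor threshold is $\theta=I^*_{US}(\lambda')+T+\alpha|J|^+$, so the packing exponent is $|T+\alpha|J|^+-R|^+$; adding the divergence and optimizing ($\min$ over $\tP_{Y|XS}$, $\max$ over $\tP_{UX|S}$, $\min$ over $\tP_S$) reproduces \eqref{E_2_erasure}. For $\calE_1$ the small-score branch forces $J\le T$ and contributes the term $\min_{\tP_{Y|XS}:J\le T}\calD(\cdots)$, while the within-margin branch uses the threshold $\theta=I^*_{US}(\lambda')+\frac1\alpha(J-T)$, giving packing exponent $|\frac1\alpha(J-T)-R|^+$ and hence the second term of \eqref{E_1_erasure}; as $\calE_1$ is a union of the two, its exponent is the minimum of the two, reproducing \eqref{E_1_erasure}. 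Finally I would pass from the ensemble average to a single deterministic code: the number of state types, bin indices and joint types is subexponential and vanishes from the exponent, the bin-depth slack $\epsilon$ can be taken arbitrarily small to recover the stated expressions, and averaging shows a positive fraction of codes meet both bounds simultaneously, while Lemma~\ref{lemma_ambiguity} guarantees the regions $\calR_0,\dots,\calR_M$ are well defined so that $\calE_1$ and $\calE_2$ are the genuine error events.

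The main obstacle I anticipate is the competitor analysis rather than the type counting. One must justify that, conditioned on $\by$ and the transmitted codeword, the competing codewords $\bu_{\lambda',l',m'}$ (over all wrong messages, all bin indices, and all state-type subcodes) are drawn independently of $\by$, so that the packing estimate and the union bound apply; in the binning construction $\by$ depends on the other bins only through the selected codeword, and this is the delicate Moulin--Wang independence property that must be invoked with care. Closely tied to it is the bookkeeping that makes the bin depth $I^*_{US}(\lambda')$ cancel the metric penalty \emph{uniformly} over all $\lambda'$, and the verification that one per-state-type design composition $\tP_{UX|S}$ can serve in the bounds for $\calE_1$ and $\calE_2$ at once.
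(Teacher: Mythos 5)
Your proposal is correct and follows essentially the same route as the paper's own proof: the Moulin--Wang CCC binning ensemble, the identical decomposition of $\calE_1$ into the small-score event $\{J \le T\}$ and the within-margin event with threshold $\frac{1}{\alpha}(J-T)$, the joint-type divergence estimate, and the conditional-type packing bound in which the bin depth $I^*_{US}(\lambda')$ cancels the metric penalty, leaving the clipped terms $\big|\frac{1}{\alpha}(J-T)-R\big|^+$ and $\big|T+\alpha|J|^+-R\big|^+$. Your ``truncated union bound (Markov plus the trivial bound $1$)'' is precisely the paper's generalized union bound $1-\prod_i(1-\alpha_i)^{t_i}\le\min\{1,\sum_i\alpha_i t_i\}$, and the delicate points you flag (competitor independence and uniform cancellation over $\lambda'$) are exactly the ones the paper handles by conditioning on $T_{\bu\bs\bx\by}$ and taking the product over state-type subcodes.
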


\begin{proof}
We analyze $\Pr\{\calE_1\}$ and $\Pr\{\calE_2\}$ using the proposed decoder \eqref{dec_rule1}. The proof is similar in some parts to the derivation done in the proof of Theorem~3.2 in \cite{MoulinWang07}, but it is given in full for the sake of completeness.

Fix a probability distribution $\hP_{\bu\bx|\bs}$ and construct a code $\calC$ as described in Subsection~\ref{subsec.code}. %
An encoding error occurs when the first encoding step fails. Namely, given $m$ and $\bs$ there is no index $l$ such that $\bu_{\lambda,l,m} \in T^*_{\bu|\bs}$. Since $\bu$ is drawn randomly according to uniform distribution from $T^*_U(\hP_{\bs})$  it follows that
\be
\Pr \left\{\calE_c(m) | \bs \right\}  = \left[1  - \Pr\left\{ \bU \in T^*_{\bu|\bs} | \bU \in T^*_U(\hP_{\bs}) \right\} \right]^{2^{N \rho(\hP_{\bs})}}
\ee
where $\calE_c(m)$ denotes encoding error when message index $m$ is encoded, and
\be
\Pr\left\{ \bU \in  T^*_{\bu|\bs} | \bU \in T^*_U(\hP_{\bs}) \right \} & = & \frac{|T^*_{\bu|\bs}|}{T^*_U(\hP_{\bs})} \nonumber \\
& \doteq&  2^{- N I^*_{US} (\hP_{\bs}) } \;.
\ee
Since $\rho(\hP_{\bs})$ was chosen to be greater than $I^*_{US}(\hP_{\bs})$ by $\epsilon$ we get that probability of encoding error of message $m$ given $\bs$ is upper bounded by
\be
\label{encode_err}
\Pr \left\{\calE_c(m) | T(\hP_{\bs}) \right\} \le  \exp \big\{-2^{N\epsilon} \big\} \;,
\ee
namely, the probability of encoding error decays in a double-exponential rate. See step~1 in \cite[p.~1338]{MoulinWang07} for more details.

The undetected error probability can be expressed as follows
\be
\label{p_e2a}
\Pr\{\calE_2\} &=& \frac{1}{M} \sum_{m=1}^M \Pr\{\calE_2| m \textrm{ is to be sent}\}\nonumber\\
&=& \Pr \Big\{ \calE_2(1) \big| \: m=1  \Big\} \nonumber\\
&=&  \sum_{T_{\bu\bs\bx\by}} P\big(T_{\bu\bs\bx\by} |\: m=1\big) \Pr \Big\{\calE_2(1) \big| T_{\bu\bs\bx\by}, m=1  \Big\} \nonumber\\
&\le&  \sum_{T_{\bu\bs\bx\by}} P\big(T_{\bu\bs\bx\by}  | \: m=1 \big) \Bigg[\Pr\Big\{ \calE_c(1) \big| T_{\bu\bs\bx\by}, m=1  \Big\}  \nonumber\\
&& \quad + \Pr\Big\{ \calE_2(1) \big| T_{\bu\bs\bx\by}, m=1 , {\calE_c(1)}^c  \Big\} \Bigg] \; ,
\ee
where $\calE_2(1)$ is the event of undetected error given that $m=1$ was sent, $T_{\bu\bs\bx\by}$ is the joint typical class of the quadruplet $(\bu,\bs,\bx,\by)$.
Since all messages are drawn according the same probability distributions, the probability of a type class $T_{\bu\bs\bx\by}$ is independent of the message index $1 \le m \le M$.
Therefore,
\be
\label{P_Tusxy}
P(T_{\bu\bs\bx\by}| \: m =1 ) & = & \exp_2 P(T_{\bu\bs\bx\by} )  \nonumber \\
& \doteq & \exp_2 \{ -N \calD(\hP_{\bs} \hP_{\bu\bx|\bs} \hP_{\by|\bx\bs} \| P_S \hP_{\bu\bx|\bs} W) \} \;,
\ee
as was shown in \cite[eq.(5.12)]{MoulinWang07}. An undetected error \emph{can} occur only if  there is a $\bu_{\lambda',l',m'} \in \calC$ such that
\be
\label{e2_era}
I(\bu_{\lambda',l',m'};\by)-I^*_{US}(\lambda') > T + \alpha|I(\bu;\by)-I^*_{US}(\hP_{\bs}) |^+  \;,
\ee
conditioned on $\bu,\bs,\by$ and $T_{\bu\bs\bx\by}$.

Following \cite[eq.~(5.13)]{MoulinWang07}, the undetected error probability is upper bounded by
\be
\Pr\Big\{ \calE_2(1) \big| T_{\bu\bs\bx\by}, m=1 , {\calE_c(1)}^c  \Big\} =
  1- \prod_{\hP_{\bs'}} \Big[1- P_{e_2} (\bu, \by, \hP_{\bs'}, T_{\bu\bs\bx\by})\Big]^{2^{NI^*_{US}(\hP_{\bs'})}(2^{NR}-1)} \;,
\ee
where $P_{e_2} (\bu, \by, \hP_{\bs'}, T_{\bu\bs\bx\by})$ is the probability that for some $l'$ and $m' \neq 1$,  $\bu_{l',m'} \in \calC(\hP_{\bs'})$ fulfills \eqref{e2_era} conditioned on $\bu,\by$ and $T_{\bu\bs\bx\by}$. $P_{e_2} (\bu, \by, \hP_{\bs'}, T_{\bu\bs\bx\by})$ can be expressed as follows
\ben
P_{e_2} (\bu, \by, \hP_{\bs'}, T_{\bu\bs\bx\by}) = \sum_{\bu' \in \calU_{e_2}(\bu, \by,\hP_{\bs'},\hP_{\bu\bs\bx\by})} P(\bu'|\hP_{\bs'}) \nonumber \\
= \sum_{\bu' \in \calU_{e_2}(\bu, \by,\hP_{\bs'},\hP_{\bu\bs\bx\by})}  |T^*_U(\hP_{\bs'})|^{-1}
\een
where
\be
\calU_{e_2}(\bu, \by,\hP_{\bs'},\hP_{\bu\bs\bx\by}) =
 \Big\{ \bu' \in T_U^*(\hP_{\bs'})\: : \: I(\bu';\by) - I^*_{US}(\hP_{\bs'}) > T  + \alpha \big|I(\bu;\by) - I^*_{US}(\hP_{\bs}) \big|^+ \Big\} \;. \nonumber \\
\ee
Clearly, $\calU_{e_2}(\bu, \by,\hP_{\bs'},\hP_{\bu\bs\bx\by})$ is contained in the following set of conditional types
\be
\calT_{e_2}(\bu, \by,\hP_{\bs'},\hP_{\bu\bs\bx\by}) & = &
\Big\{ T_{\bu'|\by}  \; : T_{\bu'}=T^*_U(\hP_{\bs'}), \nonumber \\
&&  \qquad I(\bu';\by) - I^*_{US}(\hP_{\bs'}) > T + \alpha \big|I(\bu;\by) - I^*_{US}(\hP_{\bs})\big|^+ \Big\} \nonumber \\
 &\subseteq&  \Big\{ T_{\bu'|\by}  \; :  I(\bu';\by) - I^*_{US}(\hP_{\bs'}) > T + \alpha \big| I(\bu;\by) - I^*_{US}(\hP_{\bs}) \big|^+ \Big\} \nonumber \;.
\ee
Using similar steps as in \cite[eq.~(5.14)-(5.17)]{MoulinWang07}, we get
\be
\label{pe2_bound}
&& P_{e_2} (\bu, \by, \hP_{\bs'}, T_{\bu\bs\bx\by}) \nonumber \\
&& \le \sum_{T_{\bu'|\by} \subseteq \calT_{e_2}(\bu, \by,\hP_{\bs'},\hP_{\bu\bs\bx\by})} \frac{|T_{\bu'|\by}|}{|T_{\bu'}|} \nonumber \\
&& \doteq \sum_{T_{\bu'|\by} \subseteq \calT_{e_2}(\bu, \by,\hP_{\bs'},\hP_{\bu\bs\bx\by})} 2^{-N I(\bu';\by)} \nonumber \\
&& \le  \exp_2 \Big\{-N \Big[T + \alpha |I(\bu;\by) - I^*_{US}(\hP_{\bs})|^+ +I^*_{US}(\hP_{\bs'}) \Big] \Big\} \;.
\ee
Applying the following bound  \cite[eq.~(5.18)]{MoulinWang07}, which can be regarded as a generalized union bound,
\be
\label{unionBound}
1 - \prod_i(1-\alpha_i)^{t_i} \le \ \min \left\{ 1 , \sum_i \alpha_i t_i \right\}, \qquad 0 \le \alpha_i \le 1 , t_i \ge 1
\ee
on Eq.\eqref{pe2_bound} we get
\be
\label{p_e2b}
 \Pr \left\{ \calE_2(1) | T_{\bu\bs\bx\by} , {\calE_c(1)}^c \right\}
&=&  1- \prod_{\hP_{\bs'}} \Big[1- P_{e_2} (\bu, \by, \hP_{\bs'}, T_{\bu\bs\bx\by})\Big]^{2^{NI^*_{US}(\hP_{\bs'})}(2^{NR}-1)}  \nonumber \\
& \le &  \min \left\{1, \sum_{\hP_{\bs'}} P_{e_2}(\bu,\by,T_{\bu\bs\bx\by}) 2^{N I^*_{US}(\hP_{\bs'})} (2^{NR}-1) \right\} \nonumber \\
&\le& \exp_2 \Big\{ -N \Big| T + \alpha |I(\bu;\by) - I^*_{US}(\hP_{\bs})|^+  - R   \Big|^+  \Big\} \:.
\ee
Combining \eqref{p_e2a}, \eqref{P_Tusxy}, \eqref{p_e2b} and optimizing over $\hP_{\bu\bx|\bs}$ and $\hP_{\bs}$ we get that
\be
\Pr\{\calE_2\} \le \exp_2 \left\{ -N E_2(R,W,T,\alpha) \right\} \;,
\ee
where $\calE_2(R,W,T,\alpha)$ is given in \eqref{E_2_erasure}.

Similarly to derivation of $E_2(R,W,T,\alpha)$, we can upper bound the probability of not making the right decision, denoted by $\Pr\{\calE_1\}$. This error event occurs when the received $\by$ does not belong to the decision region corresponding to the transmitted message $m$. Therefore, an error occurs when
\be
I(\bu;\by)-I^*_{US}(\hP_{\bs}) \le T+\alpha|I(\bu_{\lambda',l',m'};\by)-I^*_{US}(\lambda')|^+
\ee
for some $\lambda',l'$ and $m' \neq m$, conditioned on $\bu,\bs,\by$ and $T_{\bu\bs\bx\by}$.  This happens if and only if
\be
\label{1st_cond}
I(\bu;\by)-I^*_{US}(\lambda) \le T \;,
\ee
or
\be
\label{e1_cond2}
T < I(\bu;\by)-I^*_{US}(\lambda) \le T+\alpha|I(\bu_{\lambda',l',m'},\by)-I^*_{US}(\lambda')|^+.
\ee
Following \eqref{e1_cond2}, $I(\bu;\by)-I^*_{US}(\lambda) - T$ is strictly positive since  $T \ge 0$ and $I(\bu;\by)-I^*_{US}(\lambda)$ is strictly greater than $T$. Moreover, \eqref{e1_cond2} implies that $I(\bu_{\lambda',l',m'},\by)-I^*_{US}(\lambda')$ must be positive too since
$$
T < T+\alpha|I(\bu_{\lambda',l',m'},\by)-I^*_{US}(\lambda')|^+  \quad \Rightarrow \quad
\alpha|I(\bu_{\lambda',l',m'},\by)-I^*_{US}(\lambda')|^+ > 0 , \;\; a \ge 1
$$
which means that the clipping function $| \cdot |^+$ was not active, namely, $I(\bu_{\lambda',l',m'},\by)-I^*_{US}(\lambda') > 0$. Therefore \eqref{e1_cond2} implies that
\be
\label{e1_era}
\frac{1}{\alpha} \Big[I(\bu;\by) - I^*_{US}(\hP_{\bs}) -T \Big] \le I(\bu_{\lambda',l',m'};\by) - I^*_{US}(\lambda') \;.
\ee
Hence, the event of not making the right decision is a union of two disjoint events \eqref{1st_cond} and \eqref{e1_era}. Therefore,
\be
\label{err_e1}
\Pr\{\calE_1\} &=& \frac{1}{M} \sum_{m=1}^M \Pr \Big\{ \calE_1 \big |  m \textrm{ is to be sent}  \Big\}  \nonumber\\
& = & \Pr \Big\{ \calE_1(1) \big |  m =1 \Big\} \nonumber\\
& = &  \sum_{T_{\bu\bs\bx\by}} P\big(T_{\bu\bs\bx\by}  \big) \Pr \Big\{\calE_1(1) \big| T_{\bu\bs\bx\by}, m=1  \Big\} \nonumber\\
&\le&   \sum_{T_{\bu\bs\bx\by}} P\big(T_{\bu\bs\bx\by}  \big) \Big[\Pr\Big\{ \calE_c(1) \big| T_{\bu\bs\bx\by}, m=1  \Big\}
 + \ind \big\{I(\bu;\by)-I^*_{US}(\hP_{\bs}) \le T \big\}   \nonumber\\
&& \qquad + \Pr\Big\{ \calA(1) \big| T_{\bu\bs\bx\by}, m=1 , {\calE_c(1)}^c  \Big\} \Big]  \;,
\ee
where $\calE_1(1)$ is the event of making the wrong decision given that $m=1$,
and $\calA(1)$ is the event in which $T < I(\bu;\by) - I^*_{US}(\hP_{\bs}) \le T + \alpha[I(\bu_{\lambda',l',m'};\by) - I^*_{US}(\lambda')]$ for some $\lambda',l'$ and $m' \neq 1$. The last sum can be rewritten as follows
\be
\label{err_e1_2}
\sum_{T_{\bu\bs\bx\by}} P\big(T_{\bu\bs\bx\by}  | \: m=1 \big) \Big[\Pr\Big\{ \calE_c(1) \big| T_{\bu\bs\bx\by}, m=1  \Big\}
 + \Pr\Big\{ \calA(1) \big| T_{\bu\bs\bx\by}, m=1 , {\calE_c(1)}^c  \Big\} \Big] \nonumber\\
 \qquad + \sum_{T_{\bu\bs\bx\by}} P\big(T_{\bu\bs\bx\by}  | \: m=1 \big) \ind \big\{I(\bu;\by)-I^*_{US}(\hP_{\bs}) \le T \big\} \;. \nonumber \\
\ee
The second summand of \eqref{err_e1_2} can easily be estimated using the method of types and by applying \eqref{P_Tusxy}
\be
\label{pe1_firstSum}
\sum_{T_{\bu\bs\bx\by}} && P\big(T_{\bu\bs\bx\by} | \: m =1 \big) \ind \big\{I(\bu;\by)-I^*_{US}(\hP_{\bs}) \le T \big\} = \nonumber \\
& = & \sum_{T_{\bu\bs\bx\by}: I(\bu;\by)-I^*_{US}(\hP_{\bs}) \le T } P\big(T_{\bu\bs\bx\by} \big) \nonumber \\
&\doteq&  \exp_2  \left\{-N \min_{\hP_{\bu\bs\bx\by}: I(\bu;\by)-I^*_{US}(\hP_{\bs}) \le T }
\calD(\hP_{\bs}\hP_{\bu\bx|\bs}\hP_{\by|\bx\bs} \| P_S \hP_{\bu\bx|\bs}W) \right\}. \;
\ee
As for the first summand of \eqref{err_e1}, it can be upper bounded similarly to the undetected error probability in the following way
\be
\Pr\Big\{ \calA(1) \big| T_{\bu\bs\bx\by}, m=1 , {\calE_c(1)}^c  \Big\}
= 1- \prod_{\hP_{\bs'}} \Big[1- P_{e_1} (\bu, \by, \hP_{\bs'}, T_{\bu\bs\bx\by})\Big]^{2^{NI^*_{US}(\hP_{\bs'})}(2^{NR}-1)}
\ee
where $P_{e_1} (\bu, \by, \hP_{\bs'}, T_{\bu\bs\bx\by})$ is the probability that $\bu_{l',m'} \in \calC(\hP_{\bs'})$ fulfills \eqref{e1_era} conditioned on $\bu,\by$ and $T_{\bu\bs\bx\by}$ for some $l'$ and $m'\neq 1$. Therefore,
\be
P_{e_1} (\bu, \by, \hP_{\bs'}, T_{\bu\bs\bx\by}) = \sum_{\bu' \in \calU_{e_1}(\bu, \by,\hP_{\bs'},\hP_{\bu\bs\bx\by})} P(\bu'|\hP_{\bs'}) \nonumber \\
= \sum_{\bu' \in \calU_{e_1}(\bu, \by,\hP_{\bs'},\hP_{\bu\bs\bx\by})}  |T^*_U(\hP_{\bs'})|^{-1}
\ee
and
\begin{multline}
\calU_{e_1}(\bu, \by,\hP_{\bs'},\hP_{\bu\bs\bx\by}) = \\ \Big\{ \bu' \in T_U^*(\hP_{\bs'})\; :
I(\bu';\by) - I^*_{US}(\hP_{\bs'})
>\frac{1}{\alpha} [I(\bu;\by) - I^*_{US}(\hP_{\bs}) - T] \Big\} \;.
\end{multline}

Again, $\calU_{e_1}(\bu, \by,\hP_{\bs'},\hP_{\bu\bs\bx\by})$ is contained in the following set of conditional types
\be
\calT_{e_1}(\bu, \by,\hP_{\bs'},\hP_{\bu\bs\bx\by}) & = &
\Big\{ T_{\bu'|\by}  \; : T_{\bu'}=T^*_U(\hP_{\bs'}), \nonumber \\
&&  \qquad I(\bu';\by) - I^*_{US}(\hP_{\bs'}) > \frac{1}{\alpha} \big[I(\bu;\by) - I^*_{US}(\hP_{\bs}) -T\big] \Big\} \nonumber \\
 &\subseteq&  \Big\{ T_{\bu'|\by}  \; :  I(\bu';\by) - I^*_{US}(\hP_{\bs'}) > \frac{1}{\alpha} \big[ I(\bu;\by) - I^*_{US}(\hP_{\bs}) - T \big] \Big\} \nonumber \;.
\ee
Using similar steps as in the first part of the proof, we get that
\be
\label{pe1_bound}
&& P_{e_1} (\bu, \by, \hP_{\bs'}, T_{\bu\bs\bx\by}) \nonumber \\
&& \le \sum_{T_{\bu'|\by} \subseteq \calT_{e_1}(\bu, \by,\hP_{\bs'},\hP_{\bu\bs\bx\by})} \frac{|T_{\bu'|\by}|}{|T_{\bu'}|} \nonumber \\
&& \doteq \sum_{T_{\bu'|\by} \subseteq \calT_{e_1}(\bu, \by,\hP_{\bs'},\hP_{\bu\bs\bx\by})} 2^{-N I(\bu';\by)} \nonumber \\
&& \le  \exp_2 \Big\{-N \Big[ \frac{1}{\alpha} \big( I(\bu;\by) - I^*_{US}(\hP_{\bs})-T\big) +I^*_{US}(\hP_{\bs'}\Big] \Big\} \;.
\ee
Applying the union bound \eqref{unionBound}, we get
\be
\label{p_e1b}
 \Pr \left\{ \calE_1(1) | T_{\bu\bs\bx\by} , {\calE_c(1)}^c \right\}
&=&  1- \prod_{\hP_{\bs'}} \Big[1- P_{e_1} (\bu, \by, \hP_{\bs'}, T_{\bu\bs\bx\by})\Big]^{2^{NI^*_{US}(\hP_{\bs'})}(2^{NR}-1)}  \nonumber \\
&\le& \exp_2 \Big\{ -N \Big| \frac{1}{\alpha} \big[I(\bu;\by) - I^*_{US}(\hP_{\bs})-T\big] - R \Big|^+  \Big\} \:.
\ee
Combining \eqref{err_e1}, \eqref{pe1_firstSum}, \eqref{P_Tusxy},\eqref{p_e1b}, and optimizing over $\hP_{\bu\bx|\bs}$ and $\hP_{\bs}$ we get that
\be
\Pr\{\calE_1\} \le \exp_2 \left\{ -N E_1(R,W,T,\alpha) \right\} \;,
\ee
where $E_1(R,W,T,\alpha)$ is given in \eqref{E_1_erasure}.

\end{proof}

\section{List Decoding}
\label{sec.list}

A decoder with a variable size list produces a list of candidate estimates for the correct message. Let $\calE_1$ denote the list error event, namely, the event in which $\by$ does not fall in the decision region corresponding to the correct message. As stated by Forney \cite[p.~206]{Forney68}, the event corresponding to $\calE_2$ under decoding with an erasure option, is the average number of incorrect messages on the list, denoted by $\bar{N}_I$, where
\be
\bar{N}_I = \sum_{m'} \Pr\{m' \; \textrm{is on the list and incorrect} \} \;.
\ee

In the following Theorem, exponential bounds are offered for the probability of $\calE_1$ and on the average number of incorrect messages on the list. These bounds are obtained by generalizing the proposed decoding rule \eqref{dec_rule1} to the variable list size case by extending the range of its parameters $\alpha$ and $T$.
\begin{theorem}
\label{Thorem_list}
For every $\alpha \in (0,1)$ and $T \in \reals$ there exists a $N-$length block code of rate $R$ such that the following error exponents can be achieved simultaneously
\be
\Pr\{\calE_1\} &\le& \exp_2 \left\{ -N E_1(R,W,T,\alpha) \right\}  \\
\bar{N}_I &\le& \exp_2 \left\{ -N E_2(R,W,T,\alpha) \right\} \;,
\ee
where
\begin{multline}
\label{E_1_list}
E_1(R,W,T,\alpha) = \min_{\tP_{S}} \max_{\tP_{UX|S}} \min
\Bigg\{
  \min_{\tP_{Y|XS} : J(\tP_S \tP_{UX|S} \tP_{Y|XS}) \le T} \calD(\tP_{S}\tP_{UX|S}\hP_{Y|XS} \| P_S \tP_{UX|S} W) , \\
\min_{\tP_{Y|XS}} \Big(  \calD(\tP_{Y|XS} \| W | \tP_{S}\tP_{UX|S})
+ \Big|  \frac{1}{\alpha} \big[J(\tP_S \tP_{UX|S} \tP_{Y|XS}) - T\big]  - R   \Big|^+  \Big)  \Bigg\}  \;,
\end{multline}
and
\begin{multline}
\label{E2list}
E_2(R,W,T,\alpha) =\min_{\tP_{S}} \max_{\tP_{UX|S}} \min_{\tP_{Y|XS}} \Bigg\{
\calD\big(\tP_{S}\tP_{UX|S}\tP_{Y|XS} \| P_{S}\tP_{UX|S}W\big)  \\
+ \big| T + \alpha |J(\tP_S \tP_{UX|S} \tP_{Y|XS})|^+ \big|^+ - R \Bigg\} \;.
\end{multline}
\end{theorem}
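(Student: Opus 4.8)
The plan is to reuse the proof of Theorem~\ref{Thorem_erasure} almost verbatim, with the \emph{same} code construction and the \emph{same} decoding rule \eqref{dec_rule1}, now admitting the extended ranges $\alpha\in(0,1)$ and $T\in\reals$. The conceptual change is that the decoder is deliberately allowed to be ambiguous: the list is the set of all messages satisfying \eqref{dec_rule1}, so several indices may survive at once. The encoding-error bound \eqref{encode_err} is untouched, since it involves neither $\alpha$ nor $T$, so I may again condition on ${\calE_c(1)}^c$ at a double-exponentially small cost.

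For the list-error event $\calE_1$ I would first note that ``the correct message $m=1$ is not on the list'' is \emph{exactly} the event already treated in the erasure proof: there exist $m'\neq 1,\lambda',l'$ with $I(\bu;\by)-I^*_{US}(\lambda)\le T+\alpha|I(\bu_{\lambda',l',m'};\by)-I^*_{US}(\lambda')|^+$. Hence the entire chain \eqref{err_e1}--\eqref{p_e1b} can be reused, \emph{provided} the decomposition into the disjoint events \eqref{1st_cond} and \eqref{e1_era} survives the enlarged parameter range. I would re-examine the implication $\eqref{e1_cond2}\Rightarrow\eqref{e1_era}$: the deduction that the clipping $|\cdot|^+$ is inactive uses only $\alpha>0$, and the ensuing division by $\alpha$ is legitimate for $\alpha\in(0,1)$, while the sign of $T$ enters nowhere. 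This reproduces the same functional as in the erasure case, i.e.\ \eqref{E_1_list}.

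The genuinely new ingredient is $\bar{N}_I$. By linearity of expectation $\bar{N}_I=\sum_{m'\neq 1}\Pr\{m'\text{ on the list}\}$, and a necessary condition for $m'$ to be listed is precisely \eqref{e2_era}, so the per-codeword estimate \eqref{pe2_bound} carries over unchanged. The structural change is \emph{where} the generalized union bound \eqref{unionBound} is applied. In the erasure proof it is applied once, over all incorrect codewords at the same time (the factor $2^{NR}-1$ sits \emph{inside} a single $\min\{1,\cdot\}$), which is what produces the outer clipping $|T+\alpha|J|^+-R|^+$ in \eqref{E_2_erasure}. Here I would instead apply \eqref{unionBound} only over the bin index $(\lambda',l')$ of a \emph{single} message, obtaining $\Pr\{m'\text{ on list}\}\le\min\{1,\sum_{\hP_{\bs'}}P_{e_2}\,2^{NI^*_{US}(\hP_{\bs'})}\}\doteq 2^{-N|T+\alpha|I(\bu;\by)-I^*_{US}(\hP_{\bs})|^+|^+}$, and only afterwards multiply by the message count $2^{NR}-1$, which remains \emph{outside} the clip. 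The inner $|\cdot|^+$ thus records that a single message cannot be listed with probability exceeding one, while the absence of an outer $|\cdot|^+$ records that the expected \emph{count} $\bar{N}_I\doteq 2^{N(R-|T+\alpha|J|^+|^+)}$ is permitted to be exponentially large. Multiplying by the type probability \eqref{P_Tusxy} and optimizing (min over $\tP_S$, max over $\tP_{UX|S}$, min over $\tP_{Y|XS}$) then yields $E_2$ as in \eqref{E2list}.

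I expect the main obstacle to be precisely this bookkeeping of the two $|\cdot|^+$ operations: making rigorous why the inner clip persists while the outer one vanishes, which hinges on treating $\bar{N}_I$ as an expected count (a sum, via linearity) rather than as the probability of an ``at least one'' event. A secondary point to verify is the validity of the disjoint-case argument of the $\calE_1$ step once $\alpha<1$ and $T<0$, features that were excluded in Theorem~\ref{Thorem_erasure}; and I would also double-check the apparent switch from the joint divergence in \eqref{E_1_erasure} to the conditional divergence $\calD(\tP_{Y|XS}\|W\,|\,\tP_{S}\tP_{UX|S})$ in the second branch of \eqref{E_1_list}, confirming via the chain rule $\calD(\tP_S\tP_{UX|S}\tP_{Y|XS}\|P_S\tP_{UX|S}W)=\calD(\tP_S\|P_S)+\calD(\tP_{Y|XS}\|W\,|\,\tP_S\tP_{UX|S})$ that the two displays describe the intended exponent.
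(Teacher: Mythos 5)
Your proposal reproduces the paper's own proof essentially step for step: the same code and decoding rule with extended parameter ranges, the same reuse of the Theorem~\ref{Thorem_erasure} chain for $\Pr\{\calE_1\}$ (with the correct observation that the disjoint-case decomposition needs only $\alpha>0$), and the same treatment of $\bar{N}_I$ via linearity of expectation, applying the generalized union bound \eqref{unionBound} only over the bin indices of a single incorrect message so that the factor $M-1$ multiplies outside the $\min\{1,\cdot\}$, which is precisely why $R$ sits outside the clipping in \eqref{E2list} and $\bar{N}_I$ may exceed unity. Your closing remark about the joint versus conditional divergence is also well taken: the proof (which follows the Theorem~\ref{Thorem_erasure} derivation) actually yields the joint-divergence form of \eqref{E_1_erasure}, so the conditional divergence appearing in the second branch of \eqref{E_1_list} is a notational slip in the theorem statement rather than a substantive difference.
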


\begin{proof}
To allow list option we take $\alpha \in (0,1)$ and $T \in \reals$. Therefore, the following decoding rule will be used:
add $m$ to the list if
$$
I(\bu_{\lambda,l,m};\by)-I^*_{US}(\lambda) > T+ \alpha|I(\bu_{\lambda',l',m'};\by)-I^*_{US}(\lambda')|^+
$$
for all  $\lambda', l',m' \neq m$. An empty list is regarded as  ``erasure''.

Fix a probability distribution $\hP_{\bu\bx|\bs}$ and construct a code $\calC$ as described in Subsection~\ref{subsec.code}. The encoding error is described in \eqref{encode_err}.
Let us start with the probability of list error.
A list error occurs when
$$
I(\bu;\by)-I^*_{US}(\hP_{\bs}) \le T + \alpha\big|I(\bu_{\lambda',l',m'};\by)-I^*_{US}(\lambda')\big|^+
$$
for some $l',\lambda'$ and $m' \neq m$, conditioned on $\bu, \bs, \by$ and $T_{\bu\bs\bx\by}$. This happens if
$$
I(\bu;\by)-I^*_{US}(\lambda) \le T
$$
or
$$
T < I(\bu;\by)-I^*_{US}(\lambda) \le T  +\alpha[I(\bu_{\lambda',l',m'},\by)-I^*_{US}(\lambda')] \;,
$$
similarly to the second part of the proof of Theorem~\ref{Thorem_erasure}.
From this point on we follow the derivation of $\Pr\{\calE_1\}$ in the proof of Theorem~\ref{Thorem_erasure}, and obtain the desired exponent \eqref{E_1_list}.

Our next step is to upper bound the average number of incorrect words on the list $\bar{N}_I$.
Following \cite[eq.~(12)-(13)]{Forney68}, the average number of incorrect codewords is
\be
\label{N_bar}
\bar{N}_I & = & \frac{1}{M} \sum_{m=1}^M \sum_{m' \neq m} \Pr \Big\{ m' \: \textrm{is on list} \big|\: m \: \textrm{was sent}  \Big\} \nonumber \\
&  = & \sum_{m' > 1} \Pr \Big\{ m' \: \textrm{is on list} \big|\: m=1  \Big\} \nonumber \\
& = & (M-1) \sum_{T_{\bu\bs\bx\by}} P(T_{\bu\bs\bx\by} | \: m=1) \Pr \Big\{ m' \neq m \: \textrm{is on list} \big|\: m=1 , T_{\bu\bs\bx\by}  \Big\}  \nonumber \\
\ee
where the second equality is because the messages are equiprobable. The probability that $m' > 1$ is on the list given that $m=1$ was sent can be bounded as follows
\be
\Pr \Big\{ m' \: \textrm{is on list} \big|\: m=1  \Big\} \le \Pr\Big\{ \calE_c(1) \big| T_{\bu\bs\bx\by}, m=1  \Big\} \nonumber \\
 \qquad \quad  + \Pr\Big\{ m' \neq m \: \textrm{is on list} \big| T_{\bu\bs\bx\by}, m=1 , {\calE_c(1)}^c  \Big\}
\ee
where $\calE_c(1)$ is the encoding error event of message $m=1$. Applying \eqref{P_Tusxy} we get
\be
\bar{N}_I& \le & M  \sum_{T_{\bu\bs\bx\by}} P(T_{\bu\bs\bx\by} )
\Big[\Pr\Big\{ \calE_c(1) \big| T_{\bu\bs\bx\by}, m=1  \Big\} \nonumber \\
&& \qquad \quad  + \Pr\Big\{ m' \neq m \: \textrm{is on list} \big| T_{\bu\bs\bx\by}, m=1 , {\calE_c(1)}^c  \Big\} \Big] \;.
\ee

The probability that $m' \neq 1$ is on the decoding list given that $m=1$ was sent successfully and given $T_{\bu\bs\bx\by}$ is upper bounded by
\begin{multline}
\label{list1}
\Pr\Big\{ m' \neq m \: \textrm{ is on list} \big| T_{\bu\bs\bx\by}, m=1 , {\calE_c(1)}^c  \Big\} = \\
1- \prod_{\hP_{\bs'}} \Big[1- P_{e_2} (\bu, \by, \hP_{\bs'}, T_{\bu\bs\bx\by})\Big]^{2^{NI^*_{US}(\hP_{\bs'})}},
\end{multline}
where $P_{e_2} (\bu, \by, \hP_{\bs'}, T_{\bu\bs\bx\by})$ is the probability that there exist $\bu_{l',m'} \in \calC(\hP_{\bs'})$ which obeys the decoding rule conditioned on $\bu,\by$ and $T_{\bu\bs\bx\by}$ for some $l'$.
Namely, there exist a codeword $\bu_{l',m'}$ which ``beats'' all other codewords from a different column.
This probability is upper bounded by the event in which $\bu_{l',m'} \in \calC(\hP_{\bs'})$ ``beats'' \emph{only} the correct codeword, i.e.,
\ben
I(\bu_{\lambda',l',m'};\by)-I^*_{US}(\hP_{\bs'}) > T + \alpha\big|I(\bu;\by)-I^*_{US}(\lambda)\big|^+ \:,
\een
which in turn upper bounds \eqref{list1}. Therefore,

From this point on we follow the derivation of $\Pr\{\calE_2\}$ in the proof of Theorem~1. Note that $M$ multiplies $\Pr\Big\{ m' \neq m \: \textrm{is on list} \big| T_{\bu\bs\bx\by}, m=1 , {\calE_c(1)}^c  \Big\}$ in \eqref{N_bar}. Therefore, the coding rate $R$ is found outside the clipping function $|\cdot|^+$ in \eqref{E2list} unlike \eqref{E_2_erasure}. This implies that
$\bar{N}_I$ might be greater than unity as expected.

\end{proof}

\section{Discussion}
\label{sec.discussion}

In this paper, we proposed universally achievable error exponents for decoding with an erasure option and a variable size list. These results were obtained by examining a universal decoder with an erasure option, inspired by Csisz\'{a}r and K\"{o}rner's \cite[p.~176]{CK81} for DMCs. By changing the decoder's parameters, one can switch from list decoding ($T \in \reals, \alpha \in (0,1)$) to decoding with an erasure option ($T \ge 0, \alpha \ge 1$).
A similar behavior was exemplified by Forney \cite{Forney68} with the optimal decoding rule for DMCs.
The proposed decoder \eqref{dec_rule1} has a similar structure to Csisz\'{a}r and K\"{o}rner's decoder \eqref{csiszar_dec}, however, it does not depend on the coding rate $R$ which make it more general.

Setting specific values to $\alpha$ and $T$ achieves
some known results. If we take $\alpha=1$ and $T=0$ in Theorem~\ref{Thorem_erasure}, we get that $E_1(R,W,T,\alpha)=E_2(R,W,T,\alpha) = E(R,W)$, where $E(R,W)$ is the exponent achieved in \cite[Th.~3.2]{MoulinWang07} for a known channel.

In \cite[eq.(11a)]{Forney68}, Forney proposed a suboptimal decoding rule with an erasure option in which the decoder declares ``$m$'' if
\be
\label{forney_second}
\Pr(\by,\bx_m) \ge e^{NT} \Pr(\by,\bx_{m_2}) \;,
\ee
otherwise, an ``erasure'' is declared, where $\Pr(\by,\bx_{m_2})$ is the probability of the second most likely code word, and $T$ is a positive parameter. Hence, the probability of the most likely code word must be at least $e^{NT}$ times higher than the probability of any other code word given $\by$. If we think of $\max_{l,\lambda} [I(\bu_{\lambda,l,m};\by) - I^*_{US}(\lambda)]$ as the normalized logarithm of the empirical generalized a-post priori probability of message $m$ given $\by$, as stated in \cite[p.~1332]{MoulinWang07}, then by setting $\alpha=1$ and $T \ge 0$ in \eqref{dec_rule1}, we obtain an empirical version of Forney's suboptimal decoding rule \eqref{forney_second}.

Unlike \cite{MoulinWang07}, we fixed the penalty function $\psi$ and the bin-depth function $\rho$ beforehand. Clearly, taking $\rho(\hP_{\bs}) = I^*_{US}(\hP_{\bs})$ is an optimal choice since it is the lowest exponential rate which ensures a vanishing encoding error probability. Higher values of $\rho$ increases the probability of decoding error (see \cite[p.~1331]{MoulinWang07}). However, it is not clear whether $\psi=\rho$ is an optimal choice in \eqref{dec_rule1} (at least not when $\alpha=1$). If, for example, we derive the exponent $\calE_1$ in Th.~1 with a general penalty function $\psi(\cdot)$ we get that
\begin{multline}
\label{E_1_erasure}
E_1(R,W,T,\alpha,\psi) = \max_{\psi}\min_{\tP_{S}} \max_{\tP_{UX|S}} \min
\Big\{ \\
\min_{\tP_{Y|XS} : I(Y;U) - \psi(\tP_{S})  \le T} \calD(\tP_{S}\tP_{UX|S}\tP_{Y|XS} \| P_S \tP_{UX|S} W) ,
\\
\min_{\tP_{Y|XS}} \Big[  \calD(\tP_{Y|XS} \| W | \tP_{S}\tP_{UX|S})\\
+ \Big|  \frac{1}{\alpha} \big(I(Y;U) - \psi(\tP_{S}) - T\big)  + \min_{\hP_{S}}[\psi(\hP_{S})- \hat{I}(U;S)] - R   \Big|^+  \Big]  \Big\} \:, \nonumber
\end{multline}
where $\hat{I}(U;S)$ is the mutual information induced by $\hP_{S}(S) \tP_{U|S}(U|S)$.
As one can see, $\psi$ is involved in many places of the above expression and therefore cannot be easily optimized. Moreover, the argument used in \cite{MoulinWang07} to prove that $\psi=\rho$ is optimal cannot be applied here. This calls for further investigation.

We note that Forney's derivation cannot be easily applied to state dependent channels where side information is present at the transmitter. The main difficulty arises from the fact that the overall channel from $\bu$ to $\by$ is not memoryless.

\section{Appendix}

\begin{proof}[Proof of Lemma~\ref{lemma_ambiguity}]
Denote $J(\bu_{\lambda,l,m} ; \by) \eqde  I(\bu_{\lambda,l,m};\by)-I^*_{US}(\lambda)$.
Suppose that the lemma is false. Therefore, there are two vectors $\bu_{\lambda,l,m}$ and $\bu_{\hat{\lambda},\hat{l},\hat{m}}$ such that:
\be
\label{eq0}
J(\bu_{\lambda,l,m} ; \by) &>& T + \alpha|J(\bu_{\lambda',l',m'} ; \by)|^+ \quad \forall m' \neq m, \forall \: \lambda',l'   \;\;, \\
J(\bu_{\hat{\lambda},\hat{l},\hat{m}};\by) &>& T + \alpha|J(\bu_{\lambda',l',m'};\by)|^+ \quad \forall m' \neq \hat{m}, \forall \: \lambda',l' \;\;.
\ee
Hence,
\be
\label{eq1}
J(\bu_{\lambda,l,m} ; \by) &>& T + \alpha|J(\bu_{\hat{\lambda},\hat{l},\hat{m}};\by)|^+ \\
\label{eq2}
J(\bu_{\hat{\lambda},\hat{l},\hat{m}};\by) &>& T + \alpha|J(\bu_{\lambda,l,m};\by)|^+ \qquad.
\ee
From \eqref{eq1}-\eqref{eq2} it is clear that
\be
T + \alpha \big[J(\bu_{\lambda,l,m}; \by) \big] <  J(\bu_{\hat{\lambda},\hat{l},\hat{m}};\by) <
\frac{1}{\alpha} \big[J(\bu_{\lambda,l,m} ; \by)- T \big] \;,
\ee
which implies that
\be
\label{eq3}
\alpha T + \alpha^2 J(\bu_{\lambda,l,m} ; \by)  &<&
J(\bu_{\lambda,l,m} ; \by)- T    \nonumber \\
0 &<& (1-\alpha^2)J(\bu_{\lambda,l,m} ; \by) - T (1+\alpha) \;.
\ee
Clearly, the right hand side of \eqref{eq3} cannot be positive since $1-\alpha^2 \le 0$, $T(1+\alpha) \ge 0$ and $J(\bu_{\lambda,l,m} ; \by)$ is positive following \eqref{eq0} . Hence, the assumption that the decoding rule is ambiguous is wrong. Note that the same proof can be used to show that for $\lambda \ge 1$, Csisz\'{a}r and K\"{o}rner's decoder \cite[Th.~5.11]{CK81} is unambiguous.
\end{proof}



\end{document}